\algnewcommand\algorithmicinput{\textbf{Input:}}
\algnewcommand\INPUT{\item[\algorithmicinput]}
\theoremstyle{plain}
\newtheorem{thm}{Theorem}
\newtheorem{lem}{Lemma}
\newtheorem{rem}{Remark}
\journal{Journal of \LaTeX\ Templates}
\begin{document}

\begin{frontmatter}

\title{Black-box sampling for weakly smooth Langevin Monte Carlo using $p$-generalized Gaussian smoothing}

\author{Anh D Doan, Xin Dang}
\author{Dao Nguyen\fnref{myfootnote}}
\address{Departments of Mathematics, University of Mississippi, Oxford, MS 38655, USA}

\ead{dxnguyen@olemiss.edu}

\begin{abstract}
Discretization of continuous-time diffusion processes is a widely recognized method for sampling. However, the canonical Euler-Maruyama discretization of the Langevin diffusion process, also named as Langevin Monte Carlo (LMC), studied mostly in the context of smooth (gradient-Lipschitz) and strongly log-concave densities, a significant constraint for its deployment in many sciences, including computational statistics and statistical learning. In this paper, we establish several theoretical contributions to the literature on such sampling methods. Particularly, we generalize the Gaussian smoothing, approximate the gradient using p-generalized Gaussian smoothing and take advantage of it in the context of black-box sampling. We first present a non-strongly concave and weakly smooth black-box LMC algorithm, ideal for practical applicability of sampling challenges in a general setting. 
\end{abstract}

\begin{keyword}
Langevin Monte Carlo,
weakly smooth, $p$-generalized Gaussian, black-box sampling
\end{keyword}

\end{frontmatter}

\section{Introduction}
Over the last decade, Bayesian inference has become one of the most prevalent
inferring instruments for a variety of disciplines, including the computational
statistics and statistical learning \cite{robert2013monte}.
In general, Bayesian inference seeks to generate samples of the posterior distribution of the form:
\[
\pi(\mathrm{x})=\mathrm{e}^{-U(x)}/\int_{\mathbb{R}^{d}}\mathrm{e}^{-U(y)}\mathrm{d}y,
\]
where the function $U(\mathrm{x})$, also known as the potential function, is convex. The most conventional approaches, random walks Metropolis Hasting \cite{robert2013monte}, often struggle to select a proper proposal distribution for sampling. As a result, it has been proposed to consider continuous dynamics which inherently leave the objective distribution $\pi$ invariant. Probably one of the most well-known example of these continuous dynamic applications is the over-damped Langevin diffusion \cite{dalalyan2019user} associated with $U$, and its Euler-Maruyama discretization defines by 
\begin{equation}
\mathrm{x}_{k+1}=\mathrm{x}_{k}-\eta_{k}\nabla U(\mathrm{x}_{k})+\sqrt{2\eta}\xi_{k},\label{eq:2}
\end{equation}
where $(\eta_{k})_{k\geq1}$ is a sequence of step sizes which can be kept constant or decreasing to $0$, and $\xi_{k}\sim\mathcal{N}(0,\ I_{d\times d})$ are independent Gaussian random vectors. Nonetheless, there is a critical gap in the theory of discretization of an underlying stochastic differential equation (SDE) to the potential broad spectrum of applications in statistical inference. In particular, the application of techniques from SDEs traditionally requires $U(\mathrm{x})$ to have Lipschitz-continuous gradients. This requirement prohibits many typical utilizations \cite{durmus2018efficient}. 
\cite{chatterji2019langevin} has recently established an original approach to deal
with weakly smooth (possibly non-smooth) potential problems directly.
Their technique rests on results obtained from the optimization community, perturbing a gradient evaluating point by a Gaussian. In this paper, we show that Gaussian smoothing can be generalized to $p$-generalized Gaussian smoothing, which is novel in both Bayesian and optimization communities.
In a more general context, this novel perturbation process provides additional features, comparable to Gaussian smoothing while preserving the same result when a $p$-generalized Gaussian is a Gaussian. In particular, the bias and variance are bounded for a general $p$ while these bounds are equivalent to Gaussian smoothing for the case $p=2$. In addition, we also improve the convergence result of \cite{chatterji2019langevin} in Wasserstein distance by a simpler approach. 
The paper is organized as follows. Section 2 sets out the notation and context necessary to give our core theorems, generalizing the results obtained from \cite{chatterji2019langevin} for Gaussian smoothing. 
Section 3 broadens these outcomes of stochastic approximation of the potential (negative log-density) and composite structure of the potential, while Section 4 shows our conclusions.


\section{Weakly smooth LMC using $p$-generalized Gaussian smoothing}
The objective is to sample from a distribution $\pi\propto\exp(-U(\mathrm{x}))$,
where $\mathrm{x}\in\mathbb{R}^{d}$. We furnish the space $\mathbb{R}^{d}$
with the regular Euclidean norm $\Vert\cdot\Vert=\Vert\cdot\Vert_{2}$
and use $\left\langle\ ,\ \right\rangle $ to specify inner products.
While sampling from the exact distribution $\pi(\mathrm{x})$ is
generally computationally demanding, it is largely adequate to sample
from an approximated distribution $\tilde{\pi}(\mathrm{x})$ which is
in the vicinity of $\pi(\mathrm{x})$ by some distances. In this
paper, we use Wasserstein distance and briefly define it in Appendix A. 
Suppose the following conditions, identical to the first two assumptions
of \cite{chatterji2019langevin}:

A.1 \label{A1}
$U$ is convex and sub-differentiable. Specifically, for all
$\mathrm{x}\in\mathbb{R}^{d}$, there exists a sub-gradient of $U,\ \nabla U(\mathrm{x})\in\partial U(\mathrm{x})$,
ensuring that $\forall\mathrm{y}\in\mathbb{R}^{d}:U(\mathrm{y})\geq U(\mathrm{x})+\left\langle \nabla U(\mathrm{x}),y-x\right\rangle.$

A.2 \label{A2}
There exist $L<\infty$ and $\alpha\in[0,1]$ so that $\forall\mathrm{x},\ \mathrm{y}\in\mathbb{R}^{d}$,
we obtain 
\begin{equation}
\Vert\nabla U(\mathrm{x})-\nabla U(\mathrm{y})\Vert_{2}\leq L\Vert\mathrm{x}-\mathrm{y}\Vert_{2}^{\alpha},\label{eq:3}
\end{equation}
where $\nabla U(\mathrm{x})$ represents an arbitrary sub-gradient
of $U$ at $\mathrm{x}$.

\begin{rem}
Condition \ref{A2} is known as $(L,\alpha)$-weakly smoothness or Holder continuity of the (sub)gradients of $U$. A feature that follows straightfowardly from Eq.\eqref{eq:3} is that: 
\begin{equation}
U({\mathrm{y})\leq U(\mathrm{x})+\left\langle \nabla U(\mathrm{x}),\ \mathrm{y}-\mathrm{x}\right\rangle +\frac{L}{1+\alpha}\Vert\mathrm{y}-\mathrm{x}\Vert^{1+\alpha},\ \forall\mathrm{x},\ \mathrm{y}\in\mathrm{R^{d}}}.\label{eq:4}
\end{equation}
When $\alpha=1$, it is equivalent to the standard smoothness (Lipschitz
continuity of the gradients), whereas at the opposite extreme, when
$\alpha=0,\ U$ is (possibly) non-smooth and Lipschitz-continuous.
\end{rem}
\cite{chatterji2019langevin} perturbs the gradient evaluating point by a
Gaussian so that they can utilize weakly smooth Lipschitz potentials directly
without any additional structure. Here, we generalize their approach
to perturbation using $p$-generalize Gaussian smoothing. Particularly,
consider $\mu\geq0$, $p$-generalized Gaussian smoothing $U_{\mu}$
of $U$ is defined as: 
$$U_{\mu}(\mathrm{y}):=\mathrm{E}_{\xi}[U(\mathrm{y}+\mu\xi)]=\frac{1}{\kappa}\int_{\mathbb{R}^{d}}U(\mathrm{y}+\mu\xi)e^{-\left\Vert \xi\right\Vert _{p}^{p}/p}\mathrm{d}\xi,
$$
where $\kappa\stackrel{_{def}}{=}\int_{\mathbb{R}^{d}}e^{-\left\Vert \xi\right\Vert _{p}^{p}/p}\mathrm{d}\xi=\frac{2^{d}\Gamma^{d}(\frac{1}{p})}{p^{d-\frac{d}{p}}}$ and $\xi\sim N_{p}(0,I_{d\times d})$ (the $p$-generalized Gaussian distribution).
The rationale
for taking into account the $p$-generalized Gaussian smoothing $U_{\mu}$
rather than $U$ is that it typically benefits from superior smoothness properties. In particular, $U_{\mu}$ is smooth albeit $U$ is not. In addition, $p$-generalized Gaussian smoothing is more generalized than Gaussian smoothing in the sense that it contains all normal distribution when $p=2$ and all Laplace distribution when $p=1$. 
This family of distributions allows for tails that are either heavier or lighter than normal and in the limit as well as containing all the continuous uniform distribution. It is possible to study 
$p \in \mathbb{R}^+$ but to simplify the proof, we are only interested in $p \in [1,2]$. 
Here we examine some primary features of
$U_{\mu}$ based on adapting those results of \cite{chatterji2019langevin}. 
\begin{lem}
\label{2.1} If $U:\mathbb{R}^{d}\rightarrow\mathbb{R}$ is
a convex function which satisfies Eq.\eqref{eq:3} for some $L<\infty$
and $\alpha\in[0,1]$, then:

(i) $\forall\mathrm{x}\in\mathbb{R}^{d}$ : $|U_{\mu}({\displaystyle \mathrm{x})-U(\mathrm{x})|=U_{\mu}(\mathrm{x})-U(\mathrm{x})\leq\frac{L\mu^{1+\alpha}d^{\frac{1+\alpha}{p}}}{1+\alpha}}$.

(ii) $\forall\mathrm{x},\ \mathrm{y}\in\mathbb{R}^{d}$: ${\displaystyle \Vert\nabla U_{\mu}(\mathrm{y})-\nabla U_{\mu}(\mathrm{x})\Vert_{2}\leq\frac{Ld^{\frac{1-\alpha}{p}}}{\mu^{1-\alpha}(1+\alpha)^{1-\alpha}}\Vert\mathrm{y}-\mathrm{x}\Vert_{2}}$.\end{lem}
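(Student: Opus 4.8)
The plan is to reduce each part to a moment computation for the $p$-generalized Gaussian, with the weak-smoothness descent inequality \eqref{eq:4} as the main analytic input.

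For the sign statement and lower bound in (i), I would first note that the smoothing density $\propto e^{-\|\xi\|_p^p/p}$ is symmetric, so $\mathbb{E}_\xi[\xi]=0$. Convexity of $U$ and Jensen's inequality then give $U_\mu(\mathrm{x})=\mathbb{E}_\xi[U(\mathrm{x}+\mu\xi)]\ge U(\mathbb{E}_\xi[\mathrm{x}+\mu\xi])=U(\mathrm{x})$, which establishes both $U_\mu(\mathrm{x})\ge U(\mathrm{x})$ and hence $|U_\mu(\mathrm{x})-U(\mathrm{x})|=U_\mu(\mathrm{x})-U(\mathrm{x})$. For the upper bound I would apply \eqref{eq:4} with $\mathrm{y}\leftarrow \mathrm{x}+\mu\xi$, take $\mathbb{E}_\xi$, and kill the linear term via $\mathbb{E}_\xi[\xi]=0$, leaving $U_\mu(\mathrm{x})-U(\mathrm{x})\le \frac{L\mu^{1+\alpha}}{1+\alpha}\,\mathbb{E}_\xi[\|\xi\|_2^{1+\alpha}]$. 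It then remains to show $\mathbb{E}_\xi[\|\xi\|_2^{1+\alpha}]\le d^{(1+\alpha)/p}$. For $p\in[1,2]$ I would use $\|\xi\|_2\le\|\xi\|_p$, compute the one-coordinate moment $\mathbb{E}[|\xi_1|^p]=1$ (by the substitution $u=|t|^p/p$ and $\Gamma(1+\tfrac1p)=\tfrac1p\Gamma(\tfrac1p)$), so that $\mathbb{E}_\xi[\|\xi\|_p^p]=d$, and finish by the power-mean inequality $\mathbb{E}[(\|\xi\|_p^p)^{(1+\alpha)/p}]\le(\mathbb{E}_\xi[\|\xi\|_p^p])^{(1+\alpha)/p}=d^{(1+\alpha)/p}$.

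For (ii) I would first justify differentiating under the integral, giving $\nabla U_\mu(\mathrm{y})=\mathbb{E}_\xi[\nabla U(\mathrm{y}+\mu\xi)]$, and then pass to the Stein / integration-by-parts representation $\nabla U_\mu(\mathrm{y})=\tfrac1\mu\mathbb{E}_\xi[\psi(\xi)\,U(\mathrm{y}+\mu\xi)]$, where $\psi(\xi):=\nabla_\xi\big(\tfrac1p\|\xi\|_p^p\big)=(\mathrm{sign}(\xi_i)\,|\xi_i|^{p-1})_{i=1}^d$ is the negative score of the smoothing density; the boundary terms vanish because $e^{-\|\xi\|_p^p/p}$ decays at infinity. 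Since $\psi$ is odd and the density even, $\mathbb{E}_\xi[\psi(\xi)]=0$, so I may center the representation: forming $\nabla U_\mu(\mathrm{y})-\nabla U_\mu(\mathrm{x})=\tfrac1\mu\mathbb{E}_\xi[\psi(\xi)\,(U(\mathrm{y}+\mu\xi)-U(\mathrm{x}+\mu\xi))]$, writing the inner difference by the fundamental theorem of calculus along the segment $[\mathrm{x},\mathrm{y}]$, and using $\mathbb{E}_\xi[\psi(\xi)]=0$ to subtract the same integral evaluated at $\xi=0$, so that only increments of $\nabla U$ over displacements of length $\le\mu\|\xi\|_2$ survive; these the Hölder bound \eqref{eq:3} controls by $L\mu^\alpha\|\xi\|_2^\alpha$. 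This reduces (ii) to an estimate of the shape $\|\nabla U_\mu(\mathrm{y})-\nabla U_\mu(\mathrm{x})\|_2\le L\mu^{\alpha-1}\,\mathbb{E}_\xi[\|\psi(\xi)\|_2\,\|\xi\|_2^\alpha]\,\|\mathrm{y}-\mathrm{x}\|_2$, again falling back on explicit $p$-generalized moments.

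The hard part will be the final moment estimate in (ii): producing exactly the constant $d^{(1-\alpha)/p}/(1+\alpha)^{1-\alpha}$ rather than a looser bound. A crude Cauchy--Schwarz split of $\mathbb{E}_\xi[\|\psi(\xi)\|_2\,\|\xi\|_2^\alpha]$ overstates the dimension dependence (for $p=2$ it yields $\mathbb{E}_\xi[\|\xi\|_2^{1+\alpha}]\le d^{(1+\alpha)/2}$), so obtaining the stated power $d^{(1-\alpha)/p}$ and the $(1+\alpha)^{-(1-\alpha)}$ factor will require carefully combining the centering with the identities $\mathbb{E}_\xi[\|\xi\|_p^p]=d$ and $\mathbb{E}[|\xi_1|^p]=1$, and checking that it collapses to the Gaussian ($p=2$) estimate of \cite{chatterji2019langevin}. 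A secondary obstacle is rigor in the non-smooth regime $\alpha<1$: justifying the interchange of differentiation and integration and the integration-by-parts identity when $\nabla U$ is only a subgradient, which I would handle by noting that $U_\mu$ is smooth and that \eqref{eq:3} holds for every measurable selection of subgradients. I would also flag that the power-mean step in (i) uses concavity of $t\mapsto t^{(1+\alpha)/p}$, valid when $1+\alpha\le p$; the complementary range would need a short concentration argument for $\|\xi\|_p^p$.
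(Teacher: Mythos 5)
Your part (i) follows the paper's own route (the sign via convexity and $E_{\xi}[\xi]=0$, then Eq.~\eqref{eq:4} with $\mathrm{y}=\mathrm{x}+\mu\xi$, the comparison $\Vert\xi\Vert_{2}\le\Vert\xi\Vert_{p}$ for $p\le2$, then a moment bound), but your final power-mean step requires $1+\alpha\le p$, and that excludes most of the regime the lemma covers: for $p=1$ it fails for every $\alpha>0$, and for $\alpha=1$ it fails for every $p<2$, so even the standard smooth case is lost. No concentration argument is needed to close this. The paper simply applies Jensen's inequality against a higher moment, $E[\Vert\xi\Vert_{p}^{1+\alpha}]\le\left(E[\Vert\xi\Vert_{p}^{2p}]\right)^{(1+\alpha)/(2p)}$, which is always legitimate because $1+\alpha\le2\le2p$, and evaluates $E[\Vert\xi\Vert_{p}^{2p}]=p^{2}\,\Gamma(\tfrac{d}{p}+2)/\Gamma(\tfrac{d}{p})=d(d+p)$ from the formula $E[\Vert\xi\Vert_{p}^{n}]=p^{n/p}\Gamma(\tfrac{d+n}{p})/\Gamma(\tfrac{d}{p})$ (Lemma \ref{L4}), absorbing the resulting $(1+p/d)$ factor for large $d$. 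Adopt that step and (i) is complete.

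Part (ii) is where the genuine gap lies, and you have located it but underestimated it: it is not a matter of ``carefully combining the centering with the identities,'' because your decomposition cannot yield the stated constant at all. After the FTC/centering step you are committed to $\Vert\nabla U_{\mu}(\mathrm{y})-\nabla U_{\mu}(\mathrm{x})\Vert_{2}\le L\mu^{\alpha-1}E[\Vert\psi(\xi)\Vert_{2}\Vert\xi\Vert_{2}^{\alpha}]\,\Vert\mathrm{y}-\mathrm{x}\Vert_{2}$, and that moment is genuinely of order $d^{(1+\alpha)/p}$, not an artifact of crude splitting: already at $p=2$, $\alpha=1$ it equals $E[\Vert\xi\Vert_{2}^{2}]=d$, so your route gives Lipschitz constant $Ld$, whereas the lemma claims $L$ there (as it must, since Gaussian smoothing of an $L$-smooth convex function stays $L$-smooth). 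The paper gets the sharp constant with two ideas your proposal is missing. First, instead of the fundamental theorem of calculus, it bounds the function difference \emph{uniformly in} $\xi$: it writes $|U(\mathrm{x}+\mu\xi)-U(\mathrm{y}+\mu\xi)|$ as at most the minimum of the two expansions from Eq.~\eqref{eq:4}, bounds the minimum by the average, and uses monotonicity of the subgradient (convexity, A.1) to cancel the averaged inner-product terms, leaving $\frac{L}{1+\alpha}\Vert\mathrm{y}-\mathrm{x}\Vert_{2}^{1+\alpha}$ with no $\xi$-dependence; then only $E[\Vert\psi(\xi)\Vert_{2}]\le d^{(2-p)/p}E[\Vert\xi\Vert_{p}^{p-1}]\lesssim d^{1/p}$ enters, giving the intermediate bound $\Vert\nabla U_{\mu}(\mathrm{y})-\nabla U_{\mu}(\mathrm{x})\Vert_{2}\le\frac{Ld^{1/p}}{\mu(1+\alpha)}\Vert\mathrm{y}-\mathrm{x}\Vert_{2}^{1+\alpha}$ (note the exponent $1+\alpha$, not $1$). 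Second, it converts this to a Lipschitz bound by self-interpolation: writing the left-hand side as $\Vert\cdot\Vert_{2}^{\alpha}\cdot\Vert\cdot\Vert_{2}^{1-\alpha}$, bounding the first factor by $\left(L\Vert\mathrm{y}-\mathrm{x}\Vert_{2}^{\alpha}\right)^{\alpha}$ (immediate from A.2 and Jensen, since $\nabla U_{\mu}(\mathrm{y})-\nabla U_{\mu}(\mathrm{x})=E[\nabla U(\mathrm{y}+\mu\xi)-\nabla U(\mathrm{x}+\mu\xi)]$) and the second factor by the intermediate bound raised to the power $1-\alpha$; the exponents of $\Vert\mathrm{y}-\mathrm{x}\Vert_{2}$ combine as $\alpha^{2}+(1+\alpha)(1-\alpha)=1$, and the constants give exactly $Ld^{(1-\alpha)/p}\mu^{-(1-\alpha)}(1+\alpha)^{-(1-\alpha)}$. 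Without these two steps your argument proves only a weaker statement, whose extra factor $d^{2\alpha/p}(1+\alpha)^{1-\alpha}$ would propagate into worse dimension dependence throughout Section \ref{sec:AIF}.
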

\begin{proof}
See Appendix B.1
\end{proof}
\section{Black-box sampling for regularized potential}

\label{sec:AIF}

To study the convergence of the continuous-time process (involving
strong convexity), we work with regularized potentials that have the
following composite structure: 
\begin{equation}
\overline{U}(\mathrm{x}):=U(\mathrm{x})+\frac{\lambda}{2}\left\Vert x\right\Vert ^{2},\label{eq:2.5b}
\end{equation}
where $\lambda>0$. Observe that by the triangle inequality: 
\begin{align}
\Vert\nabla\overline{U}(\mathrm{x})-\nabla\overline{U}(\mathrm{y})\Vert_{2} & \leq\Vert\nabla U(\mathrm{x})-\nabla U(\mathrm{y})\Vert_{2}+\lambda\Vert\mathrm{x}-\mathrm{y}\Vert_{2}\label{eq:2.6}\\
~ & \leq L\Vert\mathrm{x}-\mathrm{y}\Vert_{2}^{\alpha}+\lambda\Vert\mathrm{x}-\mathrm{y}\Vert_{2}.\nonumber 
\end{align}
From which, by employing Lemma \ref{2.1}, we get: 
\begin{equation}
{\displaystyle \Vert\nabla\overline{U}_{\mu}(\mathrm{x})-\nabla\overline{U}_{\mu}(\mathrm{y})\Vert_{2}\leq\left(\frac{Ld^{\frac{1-\alpha}{p}}}{\mu^{1-\alpha}(1+\alpha)^{1-\alpha}}+\lambda\right)\Vert\mathrm{x}-\mathrm{y}\Vert_{2}.}\label{eq:2.8}
\end{equation}
In this case, $U(\cdot)$ is $(L,\alpha)$-weakly smooth (possibly
with $\alpha=0$, $U$ is non-smooth and Lipschitz-continuous). We
now analyze LMC where we perturb the points at which gradients of
$\overline{U}$ are evaluated by a $p$-generalized Gaussian random
variable. Note that it remains unclear whether it is possible to achieve
such bounds for (LMC) without this perturbation. Recall that LMC in
terms of the potential $\overline{U}$ can be specified as: 
\begin{equation}
\mathrm{x}_{k+1}=\mathrm{x}_{k}-\eta\nabla\overline{U}_{\mu}(\mathrm{x}_{k})+\sqrt{2\eta}\varsigma_{k},\label{eq:LMC}
\end{equation}
where $\varsigma_{k}\sim N(0,\ I_{d\times d})$ are independent Gaussian
random vectors. This method is actually the Euler-Maruyama discretization
of the Langevin diffusion. Rather than working with the algorithm specified by Eq. \ref{eq:LMC},
we define an estimate $g_{\mu,n}(\theta)$, of the gradient $\nabla\overline{U}_{\mu}(\theta)$
as follows: 
\begin{align}
g_{\mu,n}(x)=\frac{1}{n}\sum_{i=1}^{n}\frac{\overline{U}_{\mu}(x+\mu\xi_{i})-\overline{U}_{\mu}(x)}{\mu}\left(\xi_{i}\circ\left|\xi_{i}\right|^{p-2}\right)\label{eq:gradest}
\end{align}
where $\xi_{i}\sim N_{p}(0,I_{d})$ and are i.i.d. and $\circ$ stands
for the Hadamard product and $\left|\cdot\right|$ is used for absolute
value of each component of the vector $\xi_{i}$. The gradient estimator
in Equation~\ref{eq:gradest} is a consequence of a p-generalized
Gaussian identity we derive in the proof of Lemma \ref{2.1}. Based on the above
estimate of the gradient, we obtain the following result.

\begin{lem} \label{3.1-1} For any $\mathrm{x_{k}}\in\mathbb{R}^{d}$,
then $g_{\mu,n}(x_{k})=\nabla\overline{U}_{\mu}(x_{k})+\zeta_{k}$ is an
estimator of $\nabla\overline{U}_{\mu}$ such that
\begin{align*}
\|E[\zeta_{k}\mid x_{k}]\|^{2} & \leq\left(M+\lambda\right)^{2}\mu^{2}d^{\frac{2}{p}}\\
~E\left[\left\Vert \zeta_{k}-E[\zeta_{k}\mid x_{k}]\right\Vert ^{2}\right] & \leq\frac{1}{n}\left(\frac{1}{2}\left(M+\lambda\right)\mu\left(d+3\right){}^{3\slash p}+\sqrt{2}\left(d+2\right)^{\frac{2}{p}}\left\Vert \nabla\overline{U}_{\mu}(x_{k})\right\Vert \right)^{2}
\end{align*}
\end{lem}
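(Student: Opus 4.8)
The backbone of the proof is the $p$-generalized Gaussian integration-by-parts identity already obtained in the proof of Lemma~\ref{2.1}. Since the density of $N_p(0,I_d)$ satisfies $\nabla_\xi e^{-\|\xi\|_p^p/p}=-e^{-\|\xi\|_p^p/p}(\xi\circ|\xi|^{p-2})$, integration by parts gives, for any sufficiently regular $V$, the exact identity $\nabla V_\mu(y)=\mu^{-1}E_\xi[(V(y+\mu\xi)-V(y))(\xi\circ|\xi|^{p-2})]$, where subtracting $V(y)$ is free because $E_\xi[\xi\circ|\xi|^{p-2}]=0$ by symmetry. I would first record this identity and then apply it with the choice $V=\overline U_\mu$. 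Each summand of $g_{\mu,n}(x_k)$ is then an i.i.d.\ unbiased estimator of $\nabla(\overline U_\mu)_\mu(x_k)$, so $E[g_{\mu,n}(x_k)\mid x_k]=\nabla(\overline U_\mu)_\mu(x_k)$ and consequently $E[\zeta_k\mid x_k]=\nabla(\overline U_\mu)_\mu(x_k)-\nabla\overline U_\mu(x_k)$; that is, the bias is exactly the error introduced by one further layer of smoothing.

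For the squared-bias bound I would use that $\overline U_\mu$ has an $(M+\lambda)$-Lipschitz gradient, cf.\ Eq.~\eqref{eq:2.8}; being $C^1$ it satisfies $\nabla(\overline U_\mu)_\mu(x)=E_\xi[\nabla\overline U_\mu(x+\mu\xi)]$, so that by Jensen's inequality and gradient-Lipschitzness $\|E[\zeta_k\mid x_k]\|\le E_\xi\|\nabla\overline U_\mu(x_k+\mu\xi)-\nabla\overline U_\mu(x_k)\|\le(M+\lambda)\mu\,E_\xi\|\xi\|_2$. The argument is then closed by the moment estimate $E_\xi\|\xi\|_2\le E_\xi\|\xi\|_p\le(E_\xi\|\xi\|_p^p)^{1/p}=(d\,E|\xi_1|^p)^{1/p}=d^{1/p}$, which combines $\|\cdot\|_2\le\|\cdot\|_p$ for $p\le 2$, Jensen applied to the concave map $t\mapsto t^{1/p}$, and the one-dimensional moment formula $E|\xi_1|^m=p^{m/p}\Gamma((m+1)/p)/\Gamma(1/p)$ (which gives $E|\xi_1|^p=1$). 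Squaring yields $\|E[\zeta_k\mid x_k]\|^2\le(M+\lambda)^2\mu^2 d^{2/p}$.

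For the variance, since $g_{\mu,n}$ averages $n$ i.i.d.\ copies of the one-query estimator $W:=\mu^{-1}(\overline U_\mu(x_k+\mu\xi)-\overline U_\mu(x_k))(\xi\circ|\xi|^{p-2})$, the centered second moment factorizes as $E\|\zeta_k-E[\zeta_k\mid x_k]\|^2=\tfrac1n E\|W-EW\|^2\le\tfrac1n E\|W\|^2$. I would expand $\overline U_\mu(x_k+\mu\xi)-\overline U_\mu(x_k)=\mu\langle v,\xi\rangle+R(\xi)$ with $v:=\nabla\overline U_\mu(x_k)$ and $|R(\xi)|\le\tfrac{M+\lambda}{2}\mu^2\|\xi\|_2^2$ (the standard quadratic bound for any $(M+\lambda)$-gradient-Lipschitz function), split $W$ into a linear part $W^{\mathrm{lin}}$ and a remainder $W^{\mathrm{rem}}$, and apply Minkowski's inequality in $L^2$, $\|W\|_{L^2}\le\|W^{\mathrm{lin}}\|_{L^2}+\|W^{\mathrm{rem}}\|_{L^2}$; it is precisely this triangle inequality, rather than a crude $(a+b)^2\le2a^2+2b^2$, that produces the stated $(A+B\|v\|)^2$ shape with $A=\tfrac12(M+\lambda)\mu(d+3)^{3/p}$ and $B=\sqrt2(d+2)^{2/p}$.

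Finally, using $\|\xi\circ|\xi|^{p-2}\|_2^2=\sum_j|\xi_j|^{2p-2}$, coordinate independence, and the vanishing of odd moments, the linear part reduces to $E[\langle v,\xi\rangle^2\sum_j|\xi_j|^{2p-2}]=\|v\|^2\big(E|\xi_1|^{2p}+(d-1)E\xi_1^2\,E|\xi_1|^{2p-2}\big)$, to be bounded by $B^2\|v\|^2$, while the remainder reduces to the sixth-order quantity $\tfrac{(M+\lambda)^2}{4}\mu^2E[\|\xi\|_2^4\sum_j|\xi_j|^{2p-2}]$, to be bounded by $A^2$. I expect the genuine obstacle to lie exactly in these two estimates: every mixed moment becomes an explicit ratio of Gamma functions through $E|\xi_1|^m=p^{m/p}\Gamma((m+1)/p)/\Gamma(1/p)$, and one must control these ratios together with the combinatorial dimension factors generated by $(\sum_j\xi_j^2)^2\sum_l|\xi_l|^{2p-2}$ so as to fit them, uniformly over $p\in[1,2]$, under the compact envelopes $(d+2)^{4/p}$ and $(d+3)^{6/p}$ — the $p$-generalized counterpart of the Nesterov--Spokoiny moment bounds $M_r\le(d+r)^{r/2}$, which in the Gaussian case $p=2$ can be verified directly (there $E[\langle v,\xi\rangle^2\|\xi\|^2]=(d+2)\|v\|^2$ and $E\|\xi\|^6=d(d+2)(d+4)$ indeed sit below $2(d+2)^2\|v\|^2$ and $(d+3)^3$). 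Dividing the resulting bound $E\|W\|^2\le(A+B\|v\|)^2$ by $n$ then delivers the claimed variance estimate.
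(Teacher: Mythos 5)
Your proof skeleton coincides with the paper's: the bias is handled via the $p$-generalized integration-by-parts identity, the representation $E[g_{\mu,1}(x_k)\mid x_k]=E_\xi[\nabla\overline{U}_{\mu}(x_k+\mu\xi)]$, Jensen's inequality, the $(M+\lambda)$-Lipschitz property of $\nabla\overline{U}_{\mu}$ from Eq.~\eqref{eq:2.8}, and the moment bound $E\|\xi\|_2\le d^{1/p}$; the variance is handled by the $1/n$ reduction through independence, a split of the one-query error into a linear part and a quadratic remainder, and an $L^2$ Minkowski inequality producing the $(A+B\|v\|)^2$ shape. Your bias argument is complete and correct, and it matches the paper's. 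In the variance, your choice to discard the centering via $E\|W-EW\|^2\le E\|W\|^2$ is in fact tidier than the paper's treatment, which keeps the term $C=\nabla\overline{U}_{\mu}(x_k)$ in a decomposition $A+B-C$ and absorbs it through $E\|B-C\|^2\le2E\|B\|^2+2E\|C\|^2$ (that factor of $2$ is exactly where the constant $\sqrt2(d+2)^{2/p}$ in the statement originates).

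The gap is the one you flag yourself: the two moment estimates $E[\langle v,\xi\rangle^2\sum_j|\xi_j|^{2p-2}]\le 2(d+2)^{4/p}\|v\|^2$ and $E[\|\xi\|_2^4\sum_j|\xi_j|^{2p-2}]\le(d+3)^{6/p}$ are precisely the $p$-generalized content of the lemma, and you verify them only at $p=2$, deferring the general case to an unspecified control of Gamma-function ratios and combinatorial dimension factors. The coordinate-wise route you sketch would require uniform-in-$p\in[1,2]$ bounds on mixed moments such as $E[\xi_i^2\xi_k^2|\xi_l|^{2p-2}]$, and nothing in your proposal shows these recombine under the stated envelopes. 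The paper closes this step with two vector-level inequalities that bypass coordinates altogether: the generalized H\"older bound already established in the proof of Lemma~\ref{2.1}, namely $\|\xi\circ|\xi|^{p-2}\|_2\le d^{(2-p)/p}\|\xi\|_p^{p-1}$, and the norm comparison $\|\xi\|_2\le\|\xi\|_p$ (valid since $p\le2$), combined with Cauchy--Schwarz $\langle v,\xi\rangle^2\le\|v\|^2\|\xi\|_2^2$ for the linear term. These dominate both quantities by pure moments of the scalar $\|\xi\|_p$, namely $d^{(4-2p)/p}E[\|\xi\|_p^{2p}]\|v\|^2$ and $d^{(4-2p)/p}E[\|\xi\|_p^{2p+2}]$, and those moments are exactly what the paper's Lemma~\ref{L4} controls uniformly in $p$ via Richter's formula $E\|\xi\|_p^{n}=p^{n/p}\Gamma(\frac{d+n}{p})/\Gamma(\frac{d}{p})$ together with log-convexity of $\Gamma$, giving $E\|\xi\|_p^{n}\le(d+\tfrac{n}{2})^{n/p}$. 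The dimension factors then collapse, e.g.\ $d^{(4-2p)/p}(d+p+1)^{(2p+2)/p}\le(d+p+1)^{6/p}\le(d+3)^{6/p}$, which is how the constants $\frac12(M+\lambda)\mu(d+3)^{3/p}$ and $\sqrt2(d+2)^{2/p}$ arise. Until you supply either this reduction or a completed version of your Gamma-ratio computation, the variance bound remains unproven for $p<2$.
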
 

\begin{proof} See Appendix B.2 \end{proof} 

Let the distribution of the $k^{th}$ iterate $\mathrm{y}_{k}$ be
represented by $\overline{\pi}_{\mu,k}$, and let $\overline{\pi}_{\mu}\propto\exp(-\overline{U}_{\mu})$
be the distribution with $\overline{U}_{\mu}$ as the potential. It is straightfoward that $p$-generalized Gaussian smoothing of regularized potential remains strong convexity so we omit the proof. 
Our overall tactics for showing our core results are as follows. First,
we prove that the $p$-generalized Gaussian smoothing does not alter
the objective distribution substantially in term of the Wasserstein
distance, by bounding $W_{2}(\overline{\pi},\overline{\pi}_{\mu})$
(Lemma \ref{3.2}). We then deploy a result
on mixing times of Langevin diffusion with stochastic gradients, which
enables us to bound $W_{2}(\overline{\pi}_{\mu, k},\overline{\pi}_{\mu})$.
\begin{lem} \label{3.2}Assume that $\overline{\pi}\propto\exp(-\overline{U})$
and $\overline{\pi}_{\mu}\propto\exp(-\overline{U}_{\mu})$. We deduce
the following bounds, for any $\lambda\geq0$ 
\[
W_{2}^{2}(\overline{\pi},\ \overline{\pi}_{\mu})\leq\frac{4(d+\lambda\Vert x^{*}\Vert_{2}^{2})}{\lambda}\left(a+e^{a}-1\right).
\]
where $x^{*}$ is a unique minimizer of $\overline{U}$, $a=\frac{L\mu^{1+\alpha}d^{\frac{1+\alpha}{p}}}{1+\alpha}+\frac{1}{2}\lambda\mu^{2}\left(d+1\right)^{\frac{2}{p}}$.
If, in addition, $a\leq0.1$ and for sufficient small $\lambda$,
then 
\[
W_{2}(\overline{\pi},\ \overline{\pi}_{\mu})\leq3\sqrt{\frac{da}{\lambda}},
\]
\end{lem}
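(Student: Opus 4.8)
The plan is to reduce everything to two ingredients: a uniform pointwise control of the smoothing bias $\overline{U}_\mu-\overline{U}$, and an elementary coupling inequality that converts a pointwise density-ratio bound into a Wasserstein bound weighted by the second moment of $\overline{\pi}$. First I would show that $0\le \overline{U}_\mu(\mathrm{x})-\overline{U}(\mathrm{x})\le a$ for every $\mathrm{x}$. The lower bound is immediate from Jensen's inequality: $\overline{U}$ is convex and $\xi$ is centered, so $\overline{U}_\mu(\mathrm{x})=\mathrm{E}_\xi[\overline{U}(\mathrm{x}+\mu\xi)]\ge\overline{U}(\mathrm{x})$. For the upper bound I would split $\overline{U}=U+\frac{\lambda}{2}\Vert\cdot\Vert^2$ along \eqref{eq:2.5b}: Lemma \ref{2.1}(i) bounds the $U$-part by $\frac{L\mu^{1+\alpha}d^{(1+\alpha)/p}}{1+\alpha}$, while the quadratic part contributes exactly $\frac{\lambda}{2}\mu^2\,\mathrm{E}\Vert\xi\Vert^2$ (the cross term vanishes by symmetry of $N_p$), which is at most $\frac12\lambda\mu^2(d+1)^{2/p}$ by the second-moment estimate for the $p$-generalized Gaussian. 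Summing reproduces the constant $a$.

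Second I would pass to densities. Writing $\Delta(\mathrm{x}):=\overline{U}_\mu(\mathrm{x})-\overline{U}(\mathrm{x})\in[0,a]$ and letting $Z,Z_\mu$ denote the two normalizers, the monotonicity $\overline{U}_\mu\ge\overline{U}$ gives $e^{-a}Z\le Z_\mu\le Z$, i.e. $Z/Z_\mu\in[1,e^{a}]$. Then $\overline{\pi}_\mu(\mathrm{x})-\overline{\pi}(\mathrm{x})=\overline{\pi}(\mathrm{x})\big(\tfrac{Z}{Z_\mu}e^{-\Delta(\mathrm{x})}-1\big)$, and the triangle-inequality split $\big|\tfrac{Z}{Z_\mu}e^{-\Delta}-1\big|\le e^{-\Delta}\big|\tfrac{Z}{Z_\mu}-1\big|+(1-e^{-\Delta})\le (e^{a}-1)+a$ yields the clean pointwise bound $|\overline{\pi}_\mu-\overline{\pi}|\le (a+e^{a}-1)\,\overline{\pi}$.

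Third I would invoke the elementary transport bound $W_2^2(p,q)\le 2\int\Vert\mathrm{x}\Vert^2\,|p(\mathrm{x})-q(\mathrm{x})|\,\mathrm{d}\mathrm{x}$, valid for any two densities: keep the shared mass $\min(p,q)$ on the diagonal at zero cost, couple the two residuals (each of mass equal to the total variation) by their normalized product, and use $\Vert\mathrm{x}-\mathrm{y}\Vert^2\le 2\Vert\mathrm{x}\Vert^2+2\Vert\mathrm{y}\Vert^2$. Combined with the previous step this gives $W_2^2(\overline{\pi},\overline{\pi}_\mu)\le 2(a+e^{a}-1)\,\mathrm{E}_{\overline{\pi}}\Vert X\Vert^2$. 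Since $\overline{U}$ is $\lambda$-strongly convex with unique minimizer $x^*$, the standard integration-by-parts estimate $\mathrm{E}_{\overline{\pi}}\Vert X-x^*\Vert^2\le d/\lambda$ together with $\Vert X\Vert^2\le 2\Vert X-x^*\Vert^2+2\Vert x^*\Vert^2$ gives $\mathrm{E}_{\overline{\pi}}\Vert X\Vert^2\le \tfrac{2(d+\lambda\Vert x^*\Vert^2)}{\lambda}$, and the first displayed bound follows. For the second bound, $a\le 0.1$ forces $a+e^{a}-1\le 2.06\,a$, and for $\lambda$ small enough $\lambda\Vert x^*\Vert^2$ is negligible against $d$; absorbing the numerical constants yields $W_2^2\le 9da/\lambda$, hence $W_2(\overline{\pi},\overline{\pi}_\mu)\le 3\sqrt{da/\lambda}$.

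The hard part will be the bookkeeping at two points rather than any deep idea: first, extracting the correct $p$-dependent second-moment constant $(d+1)^{2/p}$ for $N_p$ so that the quadratic contribution to $a$ matches the statement exactly, and second, organizing the normalizer estimates so that the additive $a$ survives, producing $a+e^{a}-1$ rather than the tighter $e^{a}-1$ that a direct ratio bound would give. I note in passing that a Talagrand transportation inequality would give the cleaner $W_2^2\le\tfrac2\lambda\,\mathrm{KL}(\overline{\pi}_\mu\Vert\overline{\pi})\le 2a/\lambda$, but this discards the $\sqrt{d}$-dependence of the prior work; the coupling route above is precisely what reproduces the claimed dimensional form.
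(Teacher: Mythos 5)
Your proposal is correct and follows essentially the same route as the paper's own proof: the pointwise bias bound $0\le\overline{U}_{\mu}-\overline{U}\le a$ via Lemma \ref{2.1}(i) plus the second-moment estimate for $N_{p}$, the normalizer comparison yielding $|\overline{\pi}_{\mu}-\overline{\pi}|\le(a+e^{a}-1)\,\overline{\pi}$, the transport inequality $W_{2}^{2}(p,q)\le2\int\Vert x\Vert_{2}^{2}|p(x)-q(x)|\,dx$ (which the paper cites from Villani, Theorem 6.15, and you re-derive by an explicit coupling), the strongly log-concave moment bound $d/\lambda$ with Young's inequality, and the same numerical bookkeeping ($a+e^{a}-1\le2.06a$, $\lambda$ small enough to absorb $\lambda\Vert x^{*}\Vert_{2}^{2}$) for the final $3\sqrt{da/\lambda}$ bound. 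The only cosmetic differences are that you track the two normalizers $Z,Z_{\mu}$ explicitly where the paper normalizes $Z=1$ and works with $c_{\mu}$, and that you sketch a proof of the transport inequality rather than citing it.
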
 \begin{proof} See Appendix B.3 \end{proof}
Our primary
outcome is reported in the subsequent theorem. 
\begin{thm} \label{Theorem3.2}Let
the initial iterate $\mathrm{y}_{0}$ be drawn from a probability
distribution $\overline{\pi}_{\mu, 0}$. If the step size $\eta$ satisfies
$\eta<2/(M+2\lambda)$ and for sufficiently small $\lambda$, then:
\begin{align*}
& W_{2}(\overline{\pi}_{\mu, K} ,{\pi})  \leq(1-0.5\lambda\eta)^{K/2}W_{2}(\overline{\pi}_{\mu, 0},\overline{\pi}_{\mu})+1.9\frac{\left(M+\lambda\right)}{\lambda}(\eta d)^{1\slash2}+2\frac{\left(M+\lambda\right)}{\lambda}\mu d^{1\slash p}+3\sqrt{\frac{da}{\lambda}}\\
 & +\frac{1}{\sqrt{\lambda}}\cdot\frac{1}{\sqrt{n}}\eta^{1\slash2}\left(M+\lambda\right)\mu(d+3)^{3\slash p}+\frac{\sqrt{\left(M+\lambda\right)}}{\sqrt{\lambda}}\cdot\frac{1}{\sqrt{n}}\eta^{1\slash2}d^{1/2}(d+2)^{1/p}+C\lambda\left(dlogd\right)^{4},
\end{align*}
where ${\displaystyle M=\frac{Ld^{\frac{1-\alpha}{p}}}{\mu^{1-\alpha}(1+\alpha)^{1-\alpha}}}$,
${\displaystyle a=\frac{L\mu^{1+\alpha}d^{\frac{1+\alpha}{p}}}{1+\alpha}+\frac{1}{2}\mu^{2}\lambda(d+1)^{\frac{2}{p}}}$\end{thm}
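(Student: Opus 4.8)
The plan is to control the target distance by a triangle inequality that isolates the three approximation layers introduced along the way,
$$W_2(\overline{\pi}_{\mu,K},\pi)\le W_2(\overline{\pi}_{\mu,K},\overline{\pi}_\mu)+W_2(\overline{\pi}_\mu,\overline{\pi})+W_2(\overline{\pi},\pi),$$
and to bound the three terms independently. The middle term is the $p$-generalized smoothing error already delivered by Lemma~\ref{3.2}: under the hypothesis $a\le 0.1$ and $\lambda$ small it contributes the summand $3\sqrt{da/\lambda}$ with $a$ as in the statement. The last term is the regularization error created by the penalty $\tfrac{\lambda}{2}\|x\|^2$ that turns $U$ into the strongly convex $\overline U$; I would bound it through a fourth-moment/tail estimate for $\pi$ under the weak-smoothness assumption, which for sufficiently small $\lambda$ yields the $C\lambda(d\log d)^4$ contribution. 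These two steps are essentially quotation of Lemma~\ref{3.2} and of a standard regularization estimate.

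The substance is the first term, the mixing of the stochastic-gradient iteration toward $\overline{\pi}_\mu$. I would deploy a stochastic-gradient Langevin mixing guarantee of the type in \cite{dalalyan2019user}, instantiated with the parameters supplied earlier: $\overline U_\mu$ is $\lambda$-strongly convex, and by Eq.~\eqref{eq:2.8} it is $(M+\lambda)$-smooth with $M=Ld^{(1-\alpha)/p}/(\mu^{1-\alpha}(1+\alpha)^{1-\alpha})$, so the step-size restriction $\eta<2/(M+2\lambda)$ is exactly the range in which the gradient-descent map $x\mapsto x-\eta\nabla\overline U_\mu(x)$ contracts with factor $(1-\lambda\eta)$. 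To locate each summand I would run the coupling explicitly: couple the iterate $x_{k+1}=x_k-\eta(\nabla\overline U_\mu(x_k)+\zeta_k)+\sqrt{2\eta}\varsigma_k$ synchronously with the continuous diffusion of $\overline U_\mu$ started at stationarity, and expand $E\|x_{k+1}-\bar y_{(k+1)\eta}\|^2$. The contracting drift, carried through the squared-distance recursion in which the non-centered error cross terms are absorbed by Young's inequality at the cost of a constant factor, produces the per-step rate $(1-0.5\lambda\eta)$ and hence, after solving the recursion and taking square roots, the initial-error term $(1-0.5\lambda\eta)^{K/2}W_2(\overline{\pi}_{\mu,0},\overline{\pi}_\mu)$ together with the Euler discretization term $1.9(M+\lambda)\lambda^{-1}(\eta d)^{1/2}$.

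The delicate bookkeeping is the separation of the gradient error $\zeta_k$ into its conditional mean and its fluctuation, since the two enter with different step-size powers. The bias $E[\zeta_k\mid x_k]$, bounded in norm by $(M+\lambda)\mu d^{1/p}$ in Lemma~\ref{3.1-1}, is correlated with the iterate, so its cross term with the contracting part survives the expectation and drives a fixed point of size $\sim\|E\zeta\|/\lambda$, giving the step-size-free summand $2(M+\lambda)\mu d^{1/p}/\lambda$. The centered fluctuation and the Gaussian increment, being conditionally mean-zero, contribute no such surviving cross term and enter only through their second moments $\eta^2\,E\|\zeta_k-E\zeta_k\|^2$; summing the geometric series in squared distance gives the improved $\sqrt{\eta/\lambda}$ scaling, into which I would insert the two pieces of the variance bound of Lemma~\ref{3.1-1} — the constant piece producing $\lambda^{-1/2}n^{-1/2}\eta^{1/2}(M+\lambda)\mu(d+3)^{3/p}$, and the gradient-dependent piece, after a uniform bound $E\|\nabla\overline U_\mu(x_k)\|^2\lesssim (M+\lambda)d$, producing the final $(M+\lambda)^{1/2}\lambda^{-1/2}n^{-1/2}\eta^{1/2}d^{1/2}(d+2)^{1/p}$ term. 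Collecting all summands and adding the Lemma~\ref{3.2} and regularization contributions gives the stated inequality.

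I expect the principal obstacle to be precisely this mixed treatment of the gradient error: one must keep the correlated bias in a norm-level fixed-point argument while letting the centered fluctuation accumulate through second moments, and verify that the Euler discretization remainder does not destroy the conditional centering that makes the fluctuation cross terms vanish. A secondary difficulty is the uniform-in-$k$ second-moment control of $\nabla\overline U_\mu(x_k)$ along the trajectory, which requires a separate stability estimate for the iterates under the strong-convexity/step-size regime, together with pinning down the $C\lambda(d\log d)^4$ form of the regularization error for $\pi$.
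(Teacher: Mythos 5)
Your proposal follows essentially the same route as the paper: the same three-term triangle inequality, Lemma~\ref{3.2} for the smoothing error, a Dalalyan--Karagulyan-style synchronous coupling with the bias/fluctuation split of Lemma~\ref{3.1-1} for the mixing term (bias accumulating to a $\sim\|E\zeta\|/\lambda$ fixed point, centered noise entering through second moments at the $\sqrt{\eta/\lambda}$ scale), and a moment-based bound on the regularization error $W_2(\overline{\pi},\pi)$. The only notable difference is minor: where you anticipate needing a uniform-in-$k$ stability estimate for $E\|\nabla\overline{U}_{\mu}(x_k)\|^2$ along the trajectory, the paper sidesteps this by writing $\|\nabla\overline{U}_{\mu}(x_k)\|\le\|\nabla\overline{U}_{\mu}(L_0)\|+(M+\lambda)\|\Delta_k\|$ with $L_0$ drawn from the stationary law $\overline{\pi}_{\mu}$, so only the stationary second moment $E\|\nabla\overline{U}_{\mu}(L_0)\|^2\le(M+\lambda)d$ is required, the $\|\Delta_k\|$-proportional remainder being absorbed into the contraction by taking $n$ large.
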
 

\begin{proof} Adapting the technique of analysis of \citep{dalalyan2019user,shen2019non}, by a triangle inequality, the Wasserstein distance between $\overline{\pi}_{\mu, K}$
and $\pi$ is bounded by: 
\begin{center}
$W_{2}(\overline{\pi}_{\mu, K},\pi)\leq W_{2}(\overline{\pi}_{\mu, K},\,\overline{\pi}_{\mu})+W_{2}(\overline{\pi}_{\mu},\overline{\pi})+W_{2}(\overline{\pi},\,\pi)$. 
\par\end{center}

We use \citep{dalalyan2019user} Theorem 9 to bound the first term
$W_{2}(\overline{\pi}_{K},\,\overline{\pi}_{\mu})$. Recall that $\overline{U}_{\mu}$
is continuously differentiable, $(M+\lambda)$-smooth (with $M=\frac{Ld^{(\frac{1-\alpha}{p})}}{\mu^{1-\alpha}(1+\alpha)^{1-\alpha}}$)
and $\lambda$-strongly convex. Additionally, $\{\mathrm{y}_{k}\}_{k=1}^{K}$
can be viewed as iterates of over-damped Langevin MCMC with respect
to the potential specified by $\overline{U}_{\mu}$ and is updated
using biased noisy gradients of $\overline{U}_{\mu}$. For any random vector $\boldsymbol{X}$,
we define the norm $\|\boldsymbol{X}\|_{L_{2}}=(E[\|\boldsymbol{X}\|_{2}^{2}])^{1/2}$.
Let $L_{0}$ be a random vector drawn from $\overline{\pi}_{\mu}$ such that $W_{2}(\nu_{k},\overline{\pi}_{\mu})=\|L_{0}-x_{k}\|_{L_{2}}$ and $E[\zeta_{k}|x_{k},L_{0}]=E[\zeta_{k}|x_{k}]$. Let $W$ be a d-dimensional Brownian motion independent of $(x_{k},L_{0},\zeta_{k})$, such that $W_{\eta}=\sqrt{\eta}\,\zeta_{k+1}$. We define the stochastic process $L$ so that 
$L_{t}=L_{0}-\int_{0}^{t}\nabla \overline{U}_{\mu}(L_{s})\,ds+\sqrt{2}\,W_{t},\ \forall\,t>0$. It is clear that this equation implies that ${\displaystyle L_{\eta}}=L_{0}-\int_{0}^{\eta}\nabla \overline{U}_{\mu}(L_{s})\,ds+\sqrt{2}\,W_{\eta}=L_{0}-\int_{0}^{\eta}\nabla \overline{U}_{\mu}(L_{s})\,ds+\sqrt{2\eta}\,{\zeta}_{k+1}$. Furthermore, $\{L_{t}:t\ge0\}$ is a diffusion process having $\overline{\pi}_{\mu}$ as the stationary distribution. Since the initial value $L_{0}$ is drawn from $\overline{\pi}_{\mu}$, we have $L_{t}\sim\overline{\pi}_{\mu}$ for every $t\ge0$.

Let us define $\Delta_{k}=L_{0}-x_{k},\;\Delta_{k+1}=L_{\eta}-x_{k+1}$ and $V=\int_{0}^{\eta}[\nabla\overline{U}_{\mu}(L_{s})-\nabla\overline{U}_{\mu}(L_{0})]ds$ then $\|V\|_{L_{2}}\leq\frac{1}{2}\big(\eta^{4}(\lambda+M)^{3}d\big)^{1/2}+\frac{2}{3}(2\eta^{3}d)^{1/2}(\lambda+M)\leq.95(\lambda+M)\eta^{\frac{3}{2}}d^{\frac{1}{2}}$ for $\eta$ small enough by Lemma 4 in \citep{dalalyan2019user}. Thus, 
\[\|\Delta_{k+1}\|_{L_{2}}  =\|\Delta_{k}-\eta[\nabla\overline{U}_{\mu}(x_{k}+\Delta_{k})-\nabla\overline{U}_{\mu}(x_{k})]-V+\eta\zeta_{k}\|_{L_{2}}\]
\[\leq\{\|\Delta_{k}-\eta\lambda\Delta_{k}\|_{L_{2}}^{2}+\eta^{2}\|\zeta_{k}-E[\zeta_{k}\mid x_{k}]\|_{L_{2}}^{2}\}^{1\slash2}+\|V\|_{L_{2}}+\eta\|E[\zeta_{k}\mid x_{k}]\|_{L_{2}}\]
\[\leq\left\{ (1-\lambda\eta)^{2}\|\Delta_{k}\|_{L_{2}}^{2}+\frac{\eta^{2}}{n}\left(\frac{1}{2}\left(M+\lambda\right)\mu(d+3)^{3\slash p}+(\|\nabla\overline{U}_{\mu}(L_{0})\|_{L_{2}}+\left(M+\lambda\right)\|\Delta_{k}\|_{L_{2}})\sqrt{2}\left(d+2\right)^{1/p}\right)^{2}\right\} ^{1\slash2}\]
\[\qquad+.95\left(M+\lambda\right)(\eta^{3}d)^{1\slash2}+\left(M+\lambda\right)\mu\eta d{}^{1\slash p}\]
\[\leq\left\{ (1-\lambda\eta)^{2}\|\Delta_{k}\|_{L_{2}}^{2}+\frac{2\eta^{2}}{n}\left(\frac{1}{2}\left(M+\lambda\right)\mu(d+3)^{3\slash p}+\|\nabla\overline{U}_{\mu}(L_{0})\|_{L_{2}}\sqrt{2}\left(d+2\right)^{1/p}\right)^{2}\right\} ^{1\slash2}\]
 \[+\frac{2\left(M+\lambda\right)^{2}\eta^{2}\left(d+2\right)^{\frac{2}{p}}}{n(1-\lambda\eta)}\|\Delta_{k}\|_{L_{2}}+.95\left(M+\lambda\right)(\eta^{3}d)^{1\slash2}+\left(M+\lambda\right)\mu\eta d^{1\slash p}\]
\[\leq\left\{ (1-\lambda\eta)^{2}\|\Delta_{k}\|_{L_{2}}^{2}+\frac{2\eta^{2}}{n}\left(\frac{1}{2}\left(M+\lambda\right)\mu(d+3)^{3\slash p}+\sqrt{2\left(M+\lambda\right)d}\left(d+2\right)^{1/p}\right)^{2}\right\} ^{1\slash2}\]
\[+\frac{\sqrt{2}\left(M+\lambda\right)^{2}\eta^{2}\left(d+2\right)^{2/p}}{n(1-\lambda\eta)}\|\Delta_{k}\|_{L_{2}}+.95\left(M+\lambda\right)(\eta^{3}d)^{1\slash2}+\left(M+\lambda\right)\mu\eta d^{1\slash p}.
\]
Here we use the fact that $\sqrt{a^{2}+b+c}\leq\sqrt{a^{2}+b}+\frac{c}{2a}$
and $E[\|\nabla\overline{U}_{\mu}(X_{s})\|^{2}]\leq\left(M+\lambda\right)d$.
By Lemma 9 in \citep{dalalyan2019user}, for $n$ large enough, $\frac{\sqrt{2}\left(M+\lambda\right)^{2}\eta^{2}\left(d+2\right)^{2/p}}{n(1-\lambda\eta)}\leq0.5\lambda\eta$,
the above inequality leads to 
\begin{align*}
\|\Delta_{k}\|_{L_{2}} & \leq\left(1-0.5\lambda\eta\right)^{k}\|\Delta_{0}\|_{L_{2}}+1.9\frac{\left(M+\lambda\right)}{\lambda}(\eta d)^{1/2}+2\frac{\left(M+\lambda\right)}{\lambda}\mu d^{1\slash p}\\
 & \qquad\frac{1}{\sqrt{\lambda}}\cdot\frac{1}{\sqrt{n}}\eta^{1\slash2}\left(M+\lambda\right)\mu(d+3)^{3\slash p}+\frac{\sqrt{\left(M+\lambda\right)}}{\sqrt{\lambda}}\cdot\frac{1}{\sqrt{n}}\eta^{1\slash2}d^{1/2}(d+2)^{1/p}.
\end{align*}
Therefore, we obtain the bound in Wasserstein distance. 
\begin{align*}
W_{2}(\overline{\pi}_{\mu, k},\overline{\pi}_{\mu}) & \leq\left(1-0.5\lambda\eta\right)^{k}W_{2}(\overline{\pi}_{\mu, 0},\overline{\pi}_{\mu})+1.9\frac{\left(M+\lambda\right)}{\lambda}(\eta d)^{1\slash2}+2\frac{\left(M+\lambda\right)}{\lambda}\mu d^{1\slash p}\\
 & \qquad\frac{1}{\sqrt{\lambda}}\cdot\frac{1}{\sqrt{n}}\eta^{1\slash2}\left(M+\lambda\right)\mu(d+3)^{3\slash p}+\frac{\sqrt{\left(M+\lambda\right)}}{\sqrt{\lambda}}\cdot\frac{1}{\sqrt{n}}\eta^{1\slash2}d^{1/2}(d+2)^{1/p}.
\end{align*}
By \citep{dalalyan2019bounding} Propositions 1 and 2, 
\[
W_{2}^{2}(\overline{\pi},\pi)\le111\lambda m_{2}(\pi)^{2},
\]
 where $m_{2}(\pi):=\int_{\mathbb{R}^{d}}\|x-x^{*}\|_{2}^{2}\,\pi(x)\,dx=C\left(dlogd\right)^{2}$
for some constant $C$ independent of $d.$ 
Combining with Lemma \ref{3.2} produces the desired result. \end{proof}
\begin{rem} 

When taking the limit as $\mu$ goes to $0$, our gradient estimate is unbiased, our bound is equivalent to $O(d^{\frac{2+p}{2p}}+\frac{d^{\frac{1+\alpha+p}{2p}}}{\sqrt{\lambda}})$ for non-regularization potential.
In comparison to the previous result of \citep{chatterji2019langevin},
which is of order $O(\frac{d}{\lambda})$ when considered Lipschitz
constant $L$ as a constant and $p=2$, our approach has better dependencies
on strongly convex parameter $\lambda$. In general, our result is
strictly better whenever $p<2$. In addition, our method is more generalized
than Gaussian smoothing, especially when we desire to explore the
space by heavier tail distribution than normal. \end{rem}

\section{Conclusion\label{sec:5Conclusion}}
We derive polynomial-time theoretical assurances for an LMC
that uses $p$-generalized Gaussian smoothing. 
The algorithm we proposed is a generalization of the recent Gaussian smoothing LMC method.
It is potential to broaden our results to sampling from distributions with non-smooth
and non-convex structure or integrate into derivative-free LMC algorithm.
In addition, we speculate that the dependence on $d$ and $\lambda$ is not optimal and can be improved to match those obtained for the 2-Wasserstein distance using proximal assumption.

\appendix
\setcounter{lem}{3}
\section{Distance measures}
Define a transference plan $\zeta$, a distribution on $(\mathbb{R}^{d}\times\mathbb{R}^{d},\ \mathcal{B}(\mathbb{R}^{d}\times\mathbb{R}^{d}))$ (where $\mathcal{B}(\mathbb{R}^{d}\times\mathbb{R}^{d})$ is the Borel $\sigma$-field
of ($\mathbb{R}^{d}\times\mathbb{R}^{d}$))
so that $\zeta(A\times\mathbb{R}^{d})=P(A)$ and $\zeta(\mathbb{R}^{d}\times A)=Q(A)$ for any $A\in \mathcal{B}(\mathbb{R}^{d})$. Let $\Gamma(P,\ Q)$ designate the set of all such transference plans. Then the 2-Wasserstein distance is formulated as: 
\[
W_{2}(P,Q)\ :=\left(\inf_{\zeta\in\Gamma(P,Q)}\int_{\mathrm{x},\mathrm{y}\in\mathbb{R}^{d}}\Vert\mathrm{x}-\mathrm{y}\Vert_{2}^{2}\mathrm{d}\zeta(\mathrm{x},\ \mathrm{y})\right)^{1/2}.
\]
\section{Proofs}
\subsection{Proof of Lemma 1}
\begin{proof}
(i). Since $U_{\mu}(\mathrm{x})=\mathrm{E}_{\xi}[U(\mathrm{x}+\mu\xi)]$,
$U(\mathrm{x})=\mathrm{E}_{\xi}[U(x)]$ and $E_{\xi}(\xi)=0$, which
implies $E_{\xi}\mu\left\langle \nabla U(\mathrm{x}),\ \xi\right\rangle =0$,
we obtain
$$
U_{\mu}(\mathrm{x})-U(\mathrm{x})=E\left[U(\mathrm{x}+\mu\xi)-U(\mathrm{x})-\mu\left\langle \nabla U(\mathrm{x}),\ \xi\right\rangle \right].
$$
Because $U$ is convex and $\mu>0$, from A.1, $U(\mathrm{x}+\mu\xi)-U(\mathrm{x})-\mu\left\langle \nabla U(\mathrm{x}),\ \xi\right\rangle \geq0$
for every $\xi$ and $x$, so $U_{\mu}(\mathrm{x})\geq U(\mathrm{x}),\ \forall\mathrm{x}$.
By the definition of the density of $p$-generalized Gaussian distribution, we also have: 
$$
U_{\mu}(\mathrm{x})-U(\mathrm{x})=\frac{1}{\kappa}\int_{\mathbb{R}^{d}}[U(\mathrm{x}+\mu\xi)-U(\mathrm{x})-\mu\left\langle \nabla U(\mathrm{x}),\ \xi\right\rangle ]e^{-\left\Vert \xi\right\Vert _{p}^{p}/p}\mathrm{d}\xi.
$$
Applying Eq. \ref{eq:4} and previous inequality: 
\begin{align*}
|U_{\mu}(\mathrm{x})-U(\mathrm{x})| & \leq\frac{L}{\kappa(1+\alpha)}\mu^{1+\alpha}\int_{\mathbb{R}^{d}}\left\Vert \xi\right\Vert _{2}^{(1+\alpha)}e^{-\left\Vert \xi\right\Vert _{p}^{p}/p}\mathrm{d}\xi\\
& \leq\frac{L}{\kappa(1+\alpha)}\mu^{1+\alpha}\int_{\mathbb{R}^{d}}\left\Vert \xi\right\Vert _{p}^{(1+\alpha)}e^{-\left\Vert \xi\right\Vert _{p}^{p}/p}\mathrm{d}\xi\\
& =\frac{L}{\kappa(1+\alpha)}\mu^{1+\alpha}\int_{\mathbb{R}^{d}}\left\Vert \xi\right\Vert _{p}^{2p(1+\alpha)/(2p)}e^{-\left\Vert \xi\right\Vert _{p}^{p}/p}\mathrm{d}\xi.
\end{align*}
The second inequality follows from $\left\Vert \xi\right\Vert _{p}\geq\left\Vert \xi\right\Vert _{2}$ for any $p\leq2$. Since $1\leq p$ we have $1+\alpha\leq2p$ and $x^{\frac{1+\alpha}{2p}}$ is concave, it follows that 
\begin{eqnarray*}
U_{\mu}(\mathrm{x})-U(\mathrm{x}) & \leq\frac{L}{1+\alpha}\mu^{1+\alpha}E\left[\left\Vert \xi\right\Vert _{p}^{\frac{2p(1+\alpha)}{2p}}\right] & \leq\frac{L}{1+\alpha}\mu^{1+\alpha}\left[E\left\Vert \xi\right\Vert _{p}^{2p}\right]^{\frac{1+\alpha}{2p}}\\
\leq\frac{L}{1+\alpha}\mu^{1+\alpha}\left[\frac{2p\Gamma\left(\frac{d}{p}+2\right)}{\Gamma\left(\frac{d}{p}\right)}\right]^{\frac{1+\alpha}{2p}} & =\frac{L}{1+\alpha}\mu^{1+\alpha}\left(\frac{2pd\left(d+p\right)}{p^{2}}\right)^{\frac{1+\alpha}{2p}} & =\frac{L}{1+\alpha}\mu^{1+\alpha}\left(\frac{2d\left(d+p\right)}{p}\right)^{\frac{1+\alpha}{2p}}.
\end{eqnarray*}
Finally, for sufficiently large $d$, we have 
\[
|U_{\mu}(\mathrm{x})-U(\mathrm{x})|\leq\frac{L}{1+\alpha}\mu^{1+\alpha}d^{\frac{1+\alpha}{p}}.
\]
(ii). We adapt the technique of \citep{nesterov2017random} to $p$-generalized
Gaussian smoothing. Let $y=\mathrm{x}+\mu\xi$, then $U_{\mu}(\mathrm{x})$
is rewritten in another form as 
\[
U_{\mu}(\mathrm{x})=\frac{1}{\mu^{d}\kappa}\int_{\mathbb{R}^{d}}U(y)e^{-\frac{1}{p\mu^{p}}\left\Vert y-x\right\Vert _{p}^{p}}\mathrm{d}y.
\]
Now taking the gradient with respect to $x$ of $U_{\mu}(\mathrm{x})$
gives 
\[
\nabla_{x}U_{\mu}(\mathrm{x})=\frac{1}{\mu^{d}\kappa}\nabla_{x}\int_{\mathbb{R}^{d}}U(y)e^{-\frac{1}{p\mu^{p}}\left\Vert y-x\right\Vert _{p}^{p}}\mathrm{d}y.
\]
By Fubini Theorem with some regularity (i.e. $E|U(\mathrm{y})|<\infty$), we can exchange the gradient
and integral and get 
\begin{align*}
\nabla_{x}U_{\mu}(\mathrm{x}) & =\frac{1}{\mu^{d}\kappa}\int_{\mathbb{R}^{d}}\nabla_{x}\left(U(y)e^{-\frac{1}{p\mu^{p}}\left\Vert y-x\right\Vert _{p}^{p}}\right)\mathrm{d}y\\
 & =\frac{1}{\mu^{d}\kappa}\int_{\mathbb{R}^{d}}U(y)\nabla_{x}\left(e^{-\frac{1}{p\mu^{p}}\left\Vert y-x\right\Vert _{p}^{p}}\right)\mathrm{d}y\\
 & =\frac{1}{\mu^{d}\kappa}\int_{\mathbb{R}^{d}}U(y)e^{-\frac{1}{p\mu^{p}}\left\Vert y-x\right\Vert _{p}^{p}}\frac{1}{\mu^{p}}\mathrm{\left\Vert y-x\right\Vert _{p}^{p-1}\nabla_{x}\left(\left\Vert y-x\right\Vert _{p}\right)d}y.\\
 & =-\frac{1}{\mu^{d}\kappa}\int_{\mathbb{R}^{d}}U(y)e^{-\frac{1}{p\mu^{p}}\left\Vert y-x\right\Vert _{p}^{p}}(y-x)\circ\left|y-x\right|^{p-2}dy.
\end{align*}
where $\circ$ stands for the Hadamard product and $\left|\cdot\right|$
is used for absolute value of each component of the vector $y-x$.
Thereby, by changing variable back to $\xi$, we deduce 
\begin{align*}
\nabla_{x}U_{\mu}(\mathrm{x}) & =-\frac{1}{\mu\kappa}\int_{\mathbb{R}^{d}}U(\mathrm{x}+\mu\xi)e^{-\frac{1}{p}\left\Vert \xi\right\Vert _{p}^{p}}\xi\circ\left|\xi\right|^{p-2}\mathrm{d}\xi.\\
 & =-E_{\xi}\left[U(\mathrm{x}+\mu\xi)\xi\circ\left|\xi\right|^{p-2}\right]
\end{align*}
Thus, by Jensen’s inequality:
\begin{equation}
{\displaystyle \Vert\nabla U_{\mu}(\mathrm{y})-\nabla U_{\mu}(\mathrm{x})\Vert_{2}} \leq\frac{1}{\mu\kappa}\int_{\mathbb{R}^{d}}\left\Vert U(\mathrm{y}+\mu\xi)- U(\mathrm{x}+\mu\xi)\right\Vert_{2}\left\Vert \xi\circ\left|\xi\right|^{p-2}\right\Vert_{2}e^{-\left\Vert \xi\right\Vert _{p}^{p}/p}\mathrm{d}\xi.\label{eq:2.3}
\end{equation}
Further, by using generalized Holder inequality for $1\leq p\leq2$, $\left\Vert \xi\circ\left|\xi\right|^{p-2}\right\Vert _{2}$ can be bounded as: 
\[\left\Vert \xi\circ\left|\xi\right|^{p-2}\right\Vert _{2} =\left\Vert \xi^{p-1}\right\Vert _{2}\\
\leq\left\Vert \xi^{p-1}\right\Vert _{p}\\
=\left\Vert \xi^{p-1}\cdot1_{d}\right\Vert _{p}\\
\leq\left\Vert \xi\right\Vert _{p}^{p-1}\left\Vert 1_{d}\right\Vert _{p}^{2-p}\\
=\left\Vert \xi\right\Vert _{p}^{p-1}d^{\frac{2-p}{p}}.
\]
Thus, applying Jensen’s inequality, we derive: 
\begin{equation}
{\displaystyle \Vert\nabla U_{\mu}(\mathrm{y})-\nabla U_{\mu}(\mathrm{x})\Vert_{2}\leq\frac{1}{\mu}d^{\frac{2-p}{p}}\left(\frac{p^{1-\frac{1}{p}}}{2\Gamma(\frac{1}{p})}\right)^{d}\int_{\mathbb{R}^{d}}|U(\mathrm{x}+\mu\xi)-U(\mathrm{y}+\mu\xi)|\left\Vert \xi\right\Vert _{p}^{p-1}e^{-\left\Vert \xi\right\Vert _{p}^{p}/p}\mathrm{d}\xi.}\label{eq:2.4}
\end{equation}
Using Eq. \ref{eq:4}, we obtain: 
\begin{align*}
|U(\mathrm{x}+\mu\xi)-U(\mathrm{y}+\mu\xi)| & \leq\min\left\{ \left\langle \nabla U(\mathrm{y}+\mu\xi),\mathrm{x}-\mathrm{y}\right\rangle +\frac{L}{1+\alpha}\Vert\mathrm{y}-\mathrm{x}\Vert_{2}^{1+\alpha},\right.\\
& \left.\left\langle \nabla U(\mathrm{x}+\mu\xi),\mathrm{y}-\mathrm{x}\right\rangle +\frac{L}{1+\alpha}\Vert\mathrm{y}-\mathrm{x}\Vert_{2}^{1+\alpha}\right\} \\
& \leq\frac{1}{2}\left\langle \nabla U(\mathrm{y}+\mu\xi)-\nabla U(\mathrm{x}+\mu\xi),\ \mathrm{x}-\mathrm{y}\right\rangle +\frac{L}{1+\alpha}\Vert\mathrm{y}-\mathrm{x}\Vert_{2}^{1+\alpha}\\
& \leq\frac{L}{1+\alpha}\Vert\mathrm{y}-\mathrm{x}\Vert_{2}^{1+\alpha},
\end{align*}
where the second inequality comes from the minimum being smaller than
the mean, and the last inequality is achieved by convexity of $U$ (which implies
$\left\langle \nabla U(\mathrm{x})-\nabla U(\mathrm{y}),x-y\right\rangle \geq0$,
$\forall\mathrm{x}$, $y$). Thus, combining with Eq. \ref{eq:2.4},
we induce: 
\begin{align}
\Vert\nabla U_{\mu}(\mathrm{y})-\nabla U_{\mu}(\mathrm{x})\Vert_{2} & \leq\frac{L}{\mu(1+\alpha)}d^{\frac{2-p}{p}}\Vert\mathrm{y}-\mathrm{x}\Vert_{2}^{1+\alpha}\left(\frac{p^{1-\frac{1}{p}}}{2\Gamma(\frac{1}{p})}\right)^{d}\int_{\mathbb{R}^{d}}\left\Vert \xi\right\Vert _{p}^{p-1}e^{-\left\Vert \xi\right\Vert _{p}^{p}/p}\mathrm{d}\xi\nonumber \\
& =\frac{L}{\mu(1+\alpha)}d^{\frac{2-p}{p}}\Vert\mathrm{y}-\mathrm{x}\Vert_{2}^{1+\alpha}\left[\frac{p^{\frac{p-1}{p}}\Gamma(\frac{n+p-1}{p})}{\Gamma(\frac{n}{p})}\right]\\
& \leq\frac{L}{\mu(1+\alpha)}\Vert\mathrm{y}-\mathrm{x}\Vert_{2}^{1+\alpha}d^{\frac{1}{p}}.\label{eq:2.5}
\end{align}
Finally, combining A.2 and \eqref{eq:2.5}: 
\[
\Vert\nabla U_{\mu}(\mathrm{y})-\nabla U_{\mu}(\mathrm{x})\Vert_{2}=\Vert\nabla U_{\mu}(\mathrm{y})-\nabla U_{\mu}(\mathrm{x})\Vert_{2}^{\alpha}\cdot\Vert\nabla U_{\mu}(\mathrm{y})-\nabla U_{\mu}(\mathrm{x})\Vert_{2}^{1-\alpha}
\]
\[
\leq\frac{Ld^{(\frac{1-\alpha}{p})}}{\mu^{1-\alpha}(1+\alpha)^{1-\alpha}}\Vert\mathrm{y}-\mathrm{x}\Vert_{2},
\]
as claimed.
\end{proof}
\subsection{Proof of Lemma 2}
\begin{proof} Under smooth assumption, 
\begin{align*}
\|E[\zeta_{k}\mid x_{k}]\|^{2} & =\|E[g_{\mu,n}(x_{k})-\nabla\overline{U}_{\mu}(x_{k})\mid x_{k}]\|^{2}\\
 & \leq\|E[g_{\mu,1}(x_{k})-\nabla\overline{U}_{\mu}(x_{k})\mid x_{k}]\|^{2}\\
 & =\left\Vert E\left[\nabla\overline{U}_{\mu}(x_{k}+\mu\xi)|x_{k}\right]-\nabla\overline{U}_{\mu}(x_{k})\right\Vert ^{2}\\
 & \leq E\left[\left(M+\lambda\right)\mu\left\Vert \xi\right\Vert \right]^{2}\\
 & =\left(M+\lambda\right)^{2}\mu^{2}E^{2}\left(\left\Vert \xi\right\Vert \right)\\
 & \stackrel{_{1}}{\leq}\left(M+\lambda\right)^{2}\mu^{2}d^{\frac{2}{p}},
\end{align*}

where $1$ follows from Lemma \ref{L4} below.

For the variance, by independence of p-generalized Gaussian sample
$\xi_{i}$'s, we have the following observation. 
\begin{align*}
E\left[\left\Vert \zeta_{k}-E[\zeta_{k}\mid x_{k}]\right\Vert ^{2}\right] & =E\left[\left\Vert g_{\mu,n}(x_{k})-\overline{U}_{\mu}(x_{k})\right\Vert ^{2}\right]\\
 & =\frac{1}{n}E\left[\left\Vert g_{\mu,1}(x_{k})-\overline{U}_{\mu}(x_{k})\right\Vert ^{2}\right]
\end{align*}

In order to obtain the variance of $g_{\mu,1}(x_{k})$, we split the
centered error term into three parts. 
\begin{align*}
\zeta_{k}-E[\zeta_{k}\mid x_{k}] & =\frac{\overline{U}_{\mu}(x_{k}+\mu\xi)-\overline{U}_{\mu}(x_{k})}{\mu}\left(\xi\circ\left|\xi\right|^{p-2}\right)-\nabla\overline{U}_{\mu}(x_{k})\\
 & =\frac{\overline{U}_{\mu}(x_{k}+\mu\xi)-\overline{U}_{\mu}(x_{k})-\left\langle \nabla\overline{U}_{\mu}(x_{k}),\mu\xi\right\rangle }{\mu}\left(\xi\circ\left|\xi\right|^{p-2}\right)+\left\langle \nabla\overline{U}_{\mu}(x_{k}),\xi\right\rangle \left(\xi\circ\left|\xi\right|^{p-2}\right)-\nabla\overline{U}_{\mu}(x_{k})\\
 & \stackrel{_{def}}{=}A+B-C.
\end{align*}
We have 
\begin{align*}
E[\|A\|^{2}\mid x_{k}] & =E\left[\left.\left\Vert \frac{\overline{U}_{\mu}(x_{k}+\mu\xi)-\overline{U}_{\mu}(x_{k})-\left\langle \nabla\overline{U}_{\mu}(x_{k}),\mu\xi\right\rangle }{\mu}\xi\circ\left|\xi\right|^{p-2}\right\Vert ^{2}\right|x_{k}\right]\\
 & \leq E\left[\left(\tfrac{1}{2}\left(M+\lambda\right)\mu\left\Vert \xi\right\Vert ^{2}\right)^{2}\left\Vert \xi\circ\left|\xi\right|^{p-2}\right\Vert ^{2}\mid x_{k}\right]\\
 & \leq E\left[\tfrac{1}{4}\left(M+\lambda\right)^{2}\mu^{2}\left\Vert \xi\right\Vert ^{4}\left\Vert \xi\right\Vert _{p}^{2p-2}d^{\frac{4-2p}{p}}\mid x_{k}\right]\\
 & \leq E\left[\tfrac{1}{4}\left(M+\lambda\right)^{2}\mu^{2}d^{\frac{4-2p}{p}}\left\Vert \xi\right\Vert _{p}^{2p+2}\mid x_{k}\right]\\
 & \leq\tfrac{1}{4}M\left(M+\lambda\right)^{2}\mu^{2}d^{\frac{4-2p}{p}}E[\left\Vert \xi\right\Vert _{p}^{2p+2}]\\
 & \leq\frac{1}{4}\left(M+\lambda\right)^{2}\mu^{2}d^{\frac{4-2p}{p}}(d+p+1)^{\frac{2p+2}{p}},\\
 & \leq\frac{1}{4}\left(M+\lambda\right)^{2}\mu^{2}(d+p+1)^{\frac{6}{p}},\\
E[\|B\|^{2}\mid x_{k}] & =E[\|\left\langle \nabla\overline{U}_{\mu}(x_{k}),\xi\right\rangle \left(\xi\circ\left|\xi\right|^{p-2}\right)\|^{2}\mid x_{k}]\\
 & \leq E[\left\Vert \nabla\overline{U}_{\mu}(x_{k})\right\Vert ^{2}\left\Vert \xi\right\Vert ^{2}\left\Vert \xi\circ\left|\xi\right|^{p-2}\right\Vert ^{2}\mid x_{k}]\\
 & \leq E[\left\Vert \nabla\overline{U}_{\mu}(x_{k})\right\Vert ^{2}\left\Vert \xi\right\Vert ^{2}\left\Vert \xi\right\Vert _{p}^{2p-2}d^{\frac{4-2p}{p}}\mid x_{k}]\\
 & \leq E\left[d^{\frac{4-2p}{p}}\left\Vert \xi\right\Vert _{p}^{2p}\left\Vert \nabla\overline{U}_{\mu}(x_{k})\right\Vert ^{2}\mid x_{k}\right]\\
 & \leq d^{\frac{4-2p}{p}}E\left[\left\Vert \xi\right\Vert _{p}^{2p}\right]\left\Vert \nabla\overline{U}_{\mu}(x_{k})\right\Vert ^{2}\\
 & \leq d^{\frac{4-2p}{p}}d\left(d+1\right)\left\Vert \nabla\overline{U}_{\mu}(x_{k})\right\Vert ^{2}\\
 & \leq\left[d+1\right]^{\frac{4}{p}}\left\Vert \nabla\overline{U}_{\mu}(x_{k})\right\Vert ^{2}.
\end{align*}

\begin{align*}
E[\|B-C\|^{2}\mid x_{k}] & \leq2E[\left\Vert B\right\Vert ^{2}\mid x_{k}]+2E[\left\Vert C\right\Vert ^{2}\mid x_{k}]\\
 & =2\left[d+1\right]^{\frac{4}{p}}\left\Vert \nabla\overline{U}_{\mu}(x_{k})\right\Vert ^{2}+2\left\Vert \nabla\overline{U}_{\mu}(x_{k})\right\Vert ^{2}\\
 & \leq2\left[d+2\right]^{\frac{4}{p}}\left\Vert \nabla\overline{U}_{\mu}(x_{k})\right\Vert ^{2}
\end{align*}

Thus the variance is bounded by 
\begin{align*}
E\left[\left\Vert \zeta_{k}-E[\zeta_{k}\mid x_{k}]\right\Vert ^{2}\right] & \leq\left(E\left[\left\Vert A\right\Vert ^{2}\right]{}^{\frac{1}{2}}+E\left[\left\Vert B-C\right\Vert {}^{2}\right]^{\frac{1}{2}}\right)^{2}\\
 & \leq\left(\frac{1}{2}(M+\lambda)\mu\left(d+3\right){}^{3\slash p}+\sqrt{2}\left(d+2\right)^{\frac{2}{p}}\left\Vert \nabla\overline{U}_{\mu}(x_{k})\right\Vert \right)^{2}.
\end{align*}
 \end{proof} 
\subsection{Proof of Lemma 3}
\begin{proof} This proof adapts the technique of the proof of \citep{dalalyan2019bounding}
Proposition 1. Without loss of generality we may assume that ${\displaystyle \int_{\mathbb{R}^{p}}\exp(-\overline{U}(x))dx=1}$.
We first give upper and lower bounds to the normalizing constant of
$\overline{\pi}_{\mu}$, that is 
\[
c_{\mu}\stackrel{_{\triangle}}{=}\int_{\mathbb{R}^{d}}\overline{\pi}(x)e^{-\left|\overline{U}_{\mu}(\mathrm{x})-\overline{U}(\mathrm{x})\right|}dx.
\]
The constant $c_{\mu}$ is an expectation with respect to the density
$\overline{\pi}_{\mu}$, it can be trivially upper bounded by 1. From
Lemma \ref{2.1}, $\left|\overline{U}_{\mu}(\mathrm{x})-\overline{U}(\mathrm{x})\right|=\left|E_{\xi}\left(U_{\mu}(\mathrm{x})-U(\mathrm{x})+\frac{\lambda}{2}\left\Vert x+\mu\xi\right\Vert ^{2}-\frac{\lambda}{2}\left\Vert x\right\Vert ^{2}\right)\right|=\left|E_{\xi}\left(U_{\mu}(\mathrm{x})-U(\mathrm{x})\right)\right|+\frac{\lambda}{2}\mu^{2}E\left\Vert \xi\right\Vert ^{2}\leq a=\frac{L\mu^{1+\alpha}d^{\frac{1+\alpha}{p}}}{1+\alpha}+\frac{1}{2}\lambda\mu^{2}\left(d+1\right)^{\frac{2}{p}}$,
where we have used $E\left\Vert \xi\right\Vert ^{2}\leq\left(d+1\right)^{\frac{2}{p}}$ from Lemma
4.
So, we have $e^{-\left|\overline{U}_{\mu}(\mathrm{x})-\overline{U}(\mathrm{x})\right|}\geq e^{-a}.$
This fact yields $e^{-a}\leq c_{\mu}\leq1.$ Now we control the distance
between densities $\overline{\pi}$ and $\overline{\pi}_{\mu}$ at
any fixed $x\in\mathbb{R}^{d}$: 
\begin{align*}
\left|\overline{\pi}(x)-\overline{\pi}_{\mu}(x)\right| & =\overline{\pi}(x)\left|1-\frac{e^{-\left|\overline{U}_{\mu}(\mathrm{x})-\overline{U}(\mathrm{x})\right|}}{c_{\mu}}\right|\\
~ & \leq\overline{\pi}(x)\left\{ \left(1-e^{-\left|\overline{U}_{\mu}(\mathrm{x})-\overline{U}(\mathrm{x})\right|}\right)+e^{-\left|\overline{U}_{\mu}(\mathrm{x})-\overline{U}(\mathrm{x})\right|}\left(\frac{1}{c_{\mu}}-1\right)\right\} \\
 & \leq\overline{\pi}(x)\left(\left|\overline{U}_{\mu}(\mathrm{x})-\overline{U}(\mathrm{x})\right|+e^{a}-1\right).
\end{align*}
The second inequality is trivial while the last inequality follows
from $1-e^{-x}\leq x$ for any $x\geq0$. To bound $W_{2}$, we use
an inequality from \citep{villani2008optimal}(Theorem 6.15, page
115): 
\[
W_{2}^{2}(\mu,\ \nu)\leq2\int_{\mathbb{R}^{d}}\Vert x\Vert_{2}^{2}|\mu(x)-\nu(x)|dx.
\]
Combining this with the bound on $\left|\overline{\pi}(x)-\overline{\pi}_{\mu}(x)\right|$
shown above, we have 
\[
W_{2}^{2}(\overline{\pi},\ \overline{\pi}_{\mu})\leq2\int_{\mathbb{R}^{d}}\Vert x\Vert_{2}^{2}\overline{\pi}(x)\left(a+e^{a}-1\right)dx.
\]
By \citep{durmus2019high}, the following bound on the second moment,
centered on the mode holds 
\[
\int_{\mathbb{R}^{d}}\Vert x-x^{*}\Vert_{2}^{2}\overline{\pi}(x)dx\leq\frac{d}{\lambda}.
\]
In addition, by Young inequality 
\[
\int_{\mathbb{R}^{d}}\Vert x\Vert_{2}^{2}\overline{\pi}(x)dx\leq2\int_{\mathbb{R}^{d}}\Vert x-x^{*}\Vert_{2}^{2}\overline{\pi}(x)dx+2\int_{\mathbb{R}^{d}}\Vert x^{*}\Vert_{2}^{2}\overline{\pi}(x)dx
\]
\[
\leq\frac{2d}{\lambda}+2\Vert x^{*}\Vert_{2}^{2}.
\]
Therefore 
\[
W_{2}^{2}(\overline{\pi},\ \overline{\pi}_{\mu})\leq\frac{4(d+\lambda\Vert x^{*}\Vert_{2}^{2})}{\lambda}\left(a+e^{a}-1\right),
\]
which is the claim of the Lemma. Finally, $a<0.1$ ensures that $e^{a}-1\leq1.06a$.
Follow from previous inequality we have, 
\[
W_{2}(\overline{\pi},\ \overline{\pi}_{\mu})\leq\sqrt{\frac{8.24(d+\lambda\Vert x^{*}\Vert_{2}^{2})a}{\lambda}}.
\]
Treating $L,\,\mu,\,\Vert x^{*}\Vert_{2}^{2}$ as constants, choose
$\lambda$ small enough so that $8.24\lambda\Vert x^{*}\Vert_{2}^{2}<0.76d$,
then $W_{2}(\overline{\pi},\ \overline{\pi}_{\mu})$ is less than
$3\sqrt{\frac{da}{\lambda}}$. 
\end{proof}
\subsection{Additional lemma }
\begin{lem}\label{L4}
If $\xi\sim N_{p}\left(0,I_{d}\right)$ then $d^{\left\lfloor \frac{n}{p}\right\rfloor }\leq E(\left\Vert \xi\right\Vert _{p}^{n})\leq\left[d+\frac{n}{2}\right]^{\frac{n}{p}}$where$\left\lfloor x\right\rfloor $
denotes the largest integer less than or equal to $x.$ If $n=kp,$
then $E(\left\Vert \xi\right\Vert _{p}^{n})=d..(d+k-1)$.
\end{lem}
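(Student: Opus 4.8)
The plan is to reduce all three assertions to a single distributional fact: the random variable $R:=\|\xi\|_p^p=\sum_{i=1}^d|\xi_i|^p$ is Gamma distributed, after which every claim becomes an elementary statement about Gamma moments. First I would identify the law of one coordinate power $T_i:=|\xi_i|^p$. Since the marginal density of each $\xi_i$ is proportional to $e^{-|\xi_i|^p/p}$, the substitution $t=|\xi_i|^p$ turns it into a density proportional to $t^{1/p-1}e^{-t/p}$, i.e.\ $T_i$ is Gamma with shape $1/p$ and scale $p$; in particular $E[T_i]=1$. By independence of the coordinates, $R=\sum_{i=1}^d T_i$ is Gamma with shape $d/p$ and scale $p$, so the standard Gamma moment identity gives, for every real $s\ge 0$,
\[
E\big[\|\xi\|_p^{\,sp}\big]=E[R^{s}]=p^{s}\,\frac{\Gamma(d/p+s)}{\Gamma(d/p)}.
\]
Setting $s=n/p$ yields the master formula $E[\|\xi\|_p^n]=p^{n/p}\,\Gamma(d/p+n/p)/\Gamma(d/p)$, from which the three claims are read off.

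For the exact identity with $n=kp$ I would put $s=k$ and telescope the Gamma ratio, $\Gamma(d/p+k)/\Gamma(d/p)=(d/p)(d/p+1)\cdots(d/p+k-1)$, so that
\[
E\big[\|\xi\|_p^{\,kp}\big]=p^{k}\prod_{j=0}^{k-1}\Big(\tfrac{d}{p}+j\Big)=\prod_{j=0}^{k-1}(d+jp);
\]
the compact notation $d\cdots(d+k-1)$ in the statement is this step-$p$ rising product.

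For the upper bound it suffices to establish the clean estimate $\Gamma(x+s)/\Gamma(x)\le(x+s/2)^{s}$ with $x=d/p$, since multiplying by $p^{s}$ converts the right-hand side into $(d+n/2)^{n/p}$. I would write $\log\{\Gamma(x+s)/\Gamma(x)\}=\int_{x}^{x+s}\psi(u)\,du$ with $\psi$ the digamma function, invoke the standard inequality $\psi(u)\le\log u$ to replace $\psi$ by $\log$, and then apply Jensen's inequality for the concave function $\log$ to obtain $\tfrac1s\int_{x}^{x+s}\log u\,du\le\log(x+s/2)$. Exponentiating gives the desired bound.

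For the lower bound $E[\|\xi\|_p^n]\ge d^{\lfloor n/p\rfloor}$ I would factor the Gamma ratio at the integer $m=\lfloor s\rfloor$, writing $\Gamma(x+s)/\Gamma(x)=\{\Gamma(x+s)/\Gamma(x+m)\}\cdot\prod_{j=0}^{m-1}(x+j)$, bound the fractional factor below by $1$, and use $p^{s}\ge p^{m}$ to get $E[\|\xi\|_p^n]\ge p^{m}\prod_{j=0}^{m-1}(x+j)=\prod_{j=0}^{m-1}(d+jp)\ge d^{m}$. The main obstacle is precisely this lower bound in the small-argument regime: bounding the fractional factor below by $1$ requires $\Gamma$ to be increasing on $[x+m,x+s]$, which holds only once $d/p$ is past the minimum of $\Gamma$, and for $n<p$ (so $m=0$) the inequality can genuinely fail at tiny dimension (e.g.\ $d=1,\,p=2,\,n=1$ gives $E[\|\xi\|_p^n]=\sqrt{2/\pi}<1$). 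Thus a fully uniform statement needs $d$ large enough for monotonicity; by contrast the upper bound rests only on the two robust facts $\psi\le\log$ and concavity of $\log$, and carries no such restriction.
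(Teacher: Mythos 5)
Your proposal is correct, and it is worth separating where it coincides with the paper's proof and where it genuinely departs. Both arguments rest on the same master formula $E[\Vert\xi\Vert_{p}^{n}]=p^{n/p}\,\Gamma(\tfrac{d+n}{p})/\Gamma(\tfrac{d}{p})$, but the paper simply cites \citep{richter2007generalized} for it, whereas you derive it by showing $\Vert\xi\Vert_{p}^{p}$ is Gamma$(d/p,p)$, which makes the proof self-contained. The $n=kp$ identity and the lower bound are then handled essentially as in the paper: telescope the Gamma ratio at the integer part and use $p^{n/p}\geq p^{\lfloor n/p\rfloor}$. The upper bound is where you take a genuinely different route. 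The paper invokes log-convexity of $\Gamma$ and Jensen's inequality on a convex combination of $\log\Gamma(\tfrac{d}{p})$ and $\log\Gamma(\tfrac{d}{p}+\lfloor\tfrac{n}{p}\rfloor+1)$, which produces the quantity $\bigl[\tfrac{d}{p}\cdots(\tfrac{d}{p}+\lfloor\tfrac{n}{p}\rfloor)\bigr]^{n/(p\lfloor n/p\rfloor+p)}$ and still requires an unstated AM--GM step on that arithmetic-progression product to reach $(d+n/2)^{n/p}$; you instead write $\log\{\Gamma(x+s)/\Gamma(x)\}=\int_{x}^{x+s}\psi(u)\,du$, apply $\psi(u)\leq\log u$ and concavity of $\log$, and obtain $\Gamma(x+s)/\Gamma(x)\leq(x+s/2)^{s}$ in one stroke --- cleaner, and it fills in exactly the step the paper's terse ``combining\ldots gives the conclusion'' leaves implicit. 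Finally, you correctly flag a real defect that the paper glosses over: its lower-bound step asserts ``$\Gamma$ is an increasing function,'' which is false near the origin ($\Gamma$ decreases on $(0,x_{0})$ with $x_{0}\approx1.46$), and your counterexample $d=1$, $p=2$, $n=1$ (where $E\Vert\xi\Vert_{p}=\sqrt{2/\pi}<1=d^{0}$) shows the stated lower bound genuinely fails at small dimension, so both proofs need the implicit large-$d$ qualification under which the paper operates elsewhere. Your reading of the exact identity as the step-$p$ product $\prod_{j=0}^{k-1}(d+jp)$ is likewise the correct interpretation of the paper's loosely written ``$d\ldots(d+k-1)$,'' which is literally accurate only for $p=1$.
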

\begin{proof} From \citep{richter2007generalized}, we have $E(\left\Vert \xi\right\Vert _{p}^{n})=p^{\frac{n}{p}}\frac{\Gamma\left(\frac{d+n}{p}\right)}{\Gamma\left(\frac{d}{p}\right)}.$ 

Since $\Gamma$ is an inscreasing function, $p^{\frac{n}{p}}\frac{\Gamma\left(\frac{d+n}{p}\right)}{\Gamma\left(\frac{d}{p}\right)}\geq p^{\frac{n}{p}}\frac{\Gamma\left(\frac{d}{p}+\left\lfloor \frac{n}{p}\right\rfloor \right)}{\Gamma\left(\frac{d}{p}\right)}=p^{\frac{n}{p}}\frac{d}{p}\ldots\left(\frac{d}{p}+k-1\right)\geq d^{\left\lfloor \frac{n}{p}\right\rfloor }.$ 

If $n=kp$ for $k\in N$ then $E(\left\Vert \xi\right\Vert _{p}^{n})=p^{\frac{n}{p}}\frac{d}{p}\ldots\left(\frac{d}{p}+k-1\right).$
If $n\neq kp$, let $\left\lfloor \frac{n}{p}\right\rfloor =k$. Since
$\Gamma$ is log-convex, by Jensen's inequality for any $p\geq1$,
we acquire 
\begin{align*}
 & \left(1-\frac{n}{p\left\lfloor \frac{n}{p}\right\rfloor +p}\right)\log\Gamma\left(\frac{d}{p}\right)+\frac{n}{p\left\lfloor \frac{n}{p}\right\rfloor +p}\log\Gamma\left(\frac{d}{p}+\left\lfloor \frac{n}{p}\right\rfloor +1\right)\\
 & \geq\log\Gamma\left(\left(1-\frac{n}{p\left\lfloor \frac{n}{p}\right\rfloor +p}\right)\frac{d}{p}+\frac{n}{p\left\lfloor \frac{n}{p}\right\rfloor +p}\left(\frac{d}{p}+\left\lfloor \frac{n}{p}\right\rfloor +1\right)\right)\\
 & \geq\log\Gamma\left(\frac{d+n}{p}\right)>0.
\end{align*}
Raising $e$ to the power of both sides, we get 
\[
\Gamma\left(\frac{d}{p}\right)^{\left(1-\frac{n}{p\left\lfloor \frac{n}{p}\right\rfloor +p}\right)}\Gamma\left(\frac{d}{p}+\left\lfloor \frac{n}{p}\right\rfloor +1\right)^{\frac{n}{p\left\lfloor \frac{n}{p}\right\rfloor +p}}\geq\Gamma\left(\frac{d+n}{p}\right),
\]
which implies that 
\[
\begin{array}{cc}
\left[\frac{\Gamma\left(\frac{d}{p}+\left\lfloor \frac{n}{p}\right\rfloor +1\right)}{\Gamma\left(\frac{d}{p}\right)}\right]^{\frac{n}{p\left\lfloor \frac{n}{p}\right\rfloor +p}} & \geq\frac{\Gamma\left(\frac{d+n}{p}\right)}{\Gamma\left(\frac{d}{p}\right)}\\
\left[\frac{d}{p}\ldots\left(\frac{d}{p}+\left\lfloor \frac{n}{p}\right\rfloor \right)\right]^{\frac{n}{p\left\lfloor \frac{n}{p}\right\rfloor +p}} & \geq\frac{\Gamma\left(\frac{d+n}{p}\right)}{\Gamma\left(\frac{d}{p}\right)}.
\end{array}
\]
Combining with $E(\left\Vert \xi\right\Vert _{p}^{n})=p^{\frac{n}{p}}\frac{\Gamma\left(\frac{d+n}{p}\right)}{\Gamma\left(\frac{d}{p}\right)}$
gives the conclusion.
\end{proof} 




\end{document}